\numberwithin{equation}{section}				
\newtheorem{main}{Main Theorem}
\newtheorem{thm}{Theorem}[section]
\newtheorem{lem}[thm]{Lemma}
\newtheorem{cor}[thm]{Corollary}
\newtheorem{rem}[thm]{Remark}
\crefname{thm}{Theorem}{Theorems}			
\crefname{ineq}{inequality}{inequalities}	
\renewcommand*{\eqref}[1]{\hyperref[{#1}]{\textup{\tagform@{\ref*{#1}}}}}	
\let\originalleft\left
\let\originalright\right
\renewcommand{\left}{\mathopen{}\mathclose\bgroup\originalleft}
\renewcommand{\right}{\aftergroup\egroup\originalright}
\def\rl{\mathbb{R}}
\def\N{\mathbb{N}}
\def\Z{\mathbb{Z}}
\def\C{\mathcal{C}}
\def\E{\mathcal{E}}
\def\G{\mathcal{G}}
\def\M{\mathcal{M}}
\def\S{\mathcal{S}}
\def\Aut{\mathrm{Aut}}
\def\Im{\mathrm{Im}}
\def\U{\mathrm{U}}
\def\Ar{\mathrm{Area}}						
\title{Irreducible Ginzburg--Landau fields in dimension 2}
\author{\'Akos Nagy}
\date{\today}
\keywords{Ginzburg--Landau equations, gauge theory, superconductivity}
\subjclass[2010]{70S15, 35Q56, 58J32}
\address[\'Akos Nagy]{Department of Mathematics, Michigan State University, East Lansing, MI 48824, USA}
\email{\href{mailto:contact@akosnagy.com}{contact@akosnagy.com}}
\urladdr{\href{http://akosnagy.com/}{akosnagy.com}}
\keywords{Ginzburg--Landau equations, gauge theory, moduli spaces}
\begin{document}

\begin{abstract}
Ginzburg--Landau fields are the solutions of the Ginzburg--Landau equations which depend on two positive parameters, $\alpha$ and $\beta$.  We give conditions on $\alpha$ and $\beta$ for the existence of irreducible solutions of these equations.  Our results hold for arbitrary compact, oriented, Riemannian 2-manifolds (for example, bounded domains in $\mathbb{R}^2$, spheres, tori, etc.) with de Gennes--Neumann boundary conditions.  We also prove that, for each such manifold and all positive $\alpha$ and $\beta$, the Ginzburg--Landau free energy is a Palais--Smale function on the space of gauge equivalence classes, Ginzburg--Landau fields exist for only a finite set of energy values, and the moduli space of Ginzburg--Landau fields is compact.
\end{abstract}

\maketitle

\section{Introduction}
\label{sec:intro}

Ginzburg--Landau theory is a phenomenological model for superconductivity which gives variational equations for an Abelian gauge field and a complex scalar field.  The gauge field is the electromagnetic vector potential, while the scalar field can be interpreted as the wave function of the so-called Bardeen--Cooper--Schrieffer ground state (a single quantum state occupied by a large number of Cooper pairs);  the norm of the scalar field is the order parameter of the superconducting phase.  Thus Ginzburg--Landau fields with non-vanishing scalar field describe superconducting phases, while vanishing scalar field corresponds to the normal phases of the material.

This paper investigates the existence --- and non-existence --- of superconducting phases in the 2-dimensional Ginzburg--Landau theory.  The topic has a vast literature in the case where the background is a planar (flat) domain in $\rl^2$ with various boundary conditions;  we recommend \cite{BBH94} for references.

\smallskip

Throughout this paper $\Sigma$ denotes a compact, oriented, Riemannian 2-manifold, which can have a non-empty, smooth boundary $\partial \Sigma$.  The Riemannian metric and the orientation together define a orthogonal complex structure $j$ and a symplectic form $\omega$.  These structures together make $\Sigma$ a K\"ahler manifold.  The volume form of Riemannian metric is $\omega$.  Let $L \rightarrow \Sigma$ be a smooth, complex line bundle with hermitian metric $h$.  Fix a unitary curvature tensor $F_0$ with finite $L^2$-norm.  We call $F_0$ the {\em external magnetic field}.  Finally fix two positive coupling constants $\alpha, \beta \in \rl_+$.  For each smooth unitary connection $\nabla$ and smooth section $\upphi$ consider the {\em Ginzburg--Landau free energy}:
\begin{equation}
\E_{\alpha, \beta}^{F_0} \left( \nabla, \upphi \right) = \tfrac{1}{2} \int\limits_\Sigma \left( \left| F_\nabla - F_0 \right|^2 + \left| \nabla \upphi \right|^2 - \alpha |\upphi|^2 + \tfrac{\beta}{2} |\upphi|^4 \right) \omega.  \label{eq:glf}
\end{equation}
Physically, if $\nabla^0$ is a unitary connection that satisfies
\begin{equation}
F_{\nabla^0} = F_0,  \label{eq:normal}
\end{equation}
then the \hyperref[eq:glf]{energy \eqref{eq:glf}} is the energy difference between the states described by $\left( \nabla, \upphi \right)$ and $\left( \nabla^0, 0 \right)$.

The variational equations of the \hyperref[eq:glf]{energy \eqref{eq:glf}} --- called the Ginzburg--Landau equations --- are gauge invariant, non-linear, second order partial differential equations.  If a solution $\left( \nabla, \upphi \right)$ is twice (weakly) differentiable, then it satisfies the de Gennes--Neumann boundary conditions.  These boundary conditions describe superconductor-insulator interfaces; see in \Cref{sec:glr} for details.

Each pair $\left( \nabla^0, 0 \right)$ that satisfies \cref{eq:normal} is a solution of the Ginzburg--Landau equations.  We call such a pair a {\em normal phase solution}, because the order parameter vanishes identically.  The \hyperref[eq:glf]{energy \eqref{eq:glf}} of any normal phase solution is zero.  Note that another pair $\left( \nabla, 0 \right)$ is normal phase solution if and only if the 1-form $a = \nabla - \nabla^0$ is closed.  In Abelian gauge theories, a pair $\left( \nabla, \upphi \right)$ is called {\em reducible} if $\upphi$ is identically zero, and {\em irreducible} otherwise.  It is easy to see that a solution of the Ginzburg--Landau equations is reducible if and only if it is a normal phase solution.

\smallskip

For the rest of the paper let
\begin{equation}
\lambda_1 = \inf \left\{ \int_\Sigma |\nabla^0 \upphi|^2 \omega \ \middle| \ F_{\nabla^0} = F_0 \ \mbox{and} \ \int_\Sigma |\upphi|^2 \omega = 1 \right\}.  \label{eq:lambda1}
\end{equation}

This non-negative real number depends on the geometric data $\left( \Sigma, j, \omega, L, h \right)$ and the curvature 2-form $F_0$, but independent of the coupling constants, $\alpha$ and $\beta$.  The quantity $\lambda_1$ plays a key role in the main theorems of this paper, which are listed below:

\smallskip

\begin{main}{{\rm [Existence]}}
\hypertarget{main:exi}
The Ginzburg--Landau equations with de Gennes--Neumann boundary conditions admit irreducible solutions if
\begin{equation}
\alpha > \lambda_1.  \label[ineq]{ineq:cond1}
\end{equation}
Moreover, if \cref{ineq:cond1} holds, then the absolute minimizers of the \hyperref[eq:glf]{energy \eqref{eq:glf}} are irreducible.
\end{main}

\smallskip

\begin{main}
\hypertarget{main:nonexi}{{\rm [Non-existence]}}
If the magnitude $|F_0| = B_0$ of the external magnetic field is constant, then $\lambda_1 = B_0$ and the Ginzburg--Landau equations with de Gennes--Neumann boundary conditions do not admit irreducible solutions if
\begin{equation}
\max \left\{ \alpha, \tfrac{\alpha}{2 \beta} \right\} \leqslant \lambda_1.  \label[ineq]{ineq:cond2}
\end{equation}

Furthermore, if $\partial \Sigma = \emptyset$ and the degree of $L$ is $d = c_1 (L) [\Sigma] \in \Z$, then
\begin{equation}
\lambda_1 = \tfrac{2 \pi |d|}{\Ar \left( \Sigma \right)}.
\end{equation}
\end{main}

\smallskip

Superconductors with $\beta > \tfrac{1}{2}$ are called {\em Type II}.  \hyperlink{main:exi}{Main~Theorem~1} and \hyperlink{main:nonexi}{Main~Theorem~2} imply the following for Type II superconductors:

\begin{main}{{\rm[The Type II case]}}
\hypertarget{main:typeii}
If $\beta > \tfrac{1}{2}$ and the magnitude $|F_0| = B_0$ of the external magnetic field is constant, then the Ginzburg--Landau equations with de Gennes--Neumann boundary conditions admit irreducible solutions if and only if
\begin{equation}
\alpha > B_0.  \label[ineq]{ineq:cond3}
\end{equation}
In fact, if \eqref{ineq:cond3} holds, then absolute minimizers of the \hyperref[eq:glf]{energy \eqref{eq:glf}} are irreducible.

When $\partial \Sigma = \emptyset$, then \cref{ineq:cond3} is equivalent to
\begin{equation}
\alpha > \tfrac{2 \pi |d|}{\Ar \left( \Sigma \right)}.  \label[ineq]{ineq:cond4}
\end{equation}
\end{main}

The results of \hyperlink{main:typeii}{Main~Theorem~3} also hold in the borderline $\beta = \tfrac{1}{2}$ case, which was proved in \cite{B90}*{Section~4} by Bradlow.  In fact, \hyperlink{main:typeii}{Main~Theorem~3} generalizes the 2-dimensional case of Bradlow's result to all parameters $\beta \geqslant \tfrac{1}{2}$.  \Cref{ineq:cond4} in \hyperlink{main:typeii}{Main~Theorem~3} provides the following phase diagram for Type II superconductors on closed 2-manifolds with constant external magnetic field:

\begin{figure}[ht!]
	\centering
		\labellist
			\small\hair 2pt
			\pinlabel $\alpha$ at 44 80
			\pinlabel $\alpha=\tfrac{\tau}{2}$ at 40 25.5
			\pinlabel \mbox{\parbox{170\unitlength}{ In the critical $\beta=\tfrac{1}{2}$ case, for each value of $\alpha$, the absolute minimizers on this line are called $\tau$-vortices, where $\tau = 2 \alpha$; see \cite{B90}.}} at 195 30
			\pinlabel $\mbox{A}$ at 68 18
			\pinlabel $\mbox{B}$ at 105 55
			\pinlabel $\Ar$ at 145 0
		\endlabellist
	\includegraphics[scale=1.6]{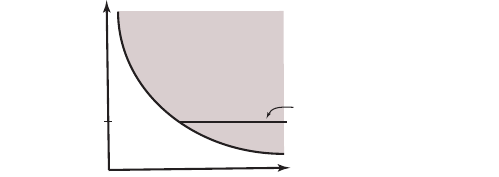}
	\captionof{figure}{{\bf Phase diagram for the $\beta \geqslant \tfrac{1}{2}$ case:} In Region~A, defined by $\alpha \leqslant \lambda_1$, only reducible solutions exist (Normal Phase).  In the complementary Region~B (shaded) there exist irreducible solutions (Superconductor Phase).}
\end{figure}

We remark, that on surfaces with boundary there is a similar phase diagram, since $2 \pi d$, in \cref{ineq:cond4} can be replaced by the magnetic flux of the external magnetic field, $\upphi_0$; see \cref{eq:flux}.

\smallskip

Our last main theorem uses a recent result of Feehan and Maridakis about the \L{}ojasiewicz--Simon inequality for analytic functions on Banach spaces \cite{FM15a}.  We show that solutions for the Ginzburg--Landau \cref{eq:gl1,eq:gl2} with de Gennes--Neumann boundary conditions exist only at finitely many energies, and the moduli space of Ginzburg--Landau fields, that is the quotient of the set of critical points of \hyperref[eq:glf]{energy \eqref{eq:glf}} by the action of the gauge group $\G$ (see definition in \Cref{sec:glr}), is compact.

\begin{main}{{\rm [Compactness]}}
\hypertarget{main:fin}
The Ginzburg--Landau free energy \eqref{eq:glf} has finitely many critical values.  Furthermore, the moduli space of Ginzburg--Landau fields
\begin{equation}
\M_{\alpha, \beta}^{F_0} = \left\{ \left( \nabla, \upphi \right) \ \middle| \ \left( \nabla, \upphi \right) \mbox{ solves \cref{eq:wgl1,eq:wgl2}} \right\} / \G.  \label{eq:glmod}
\end{equation}
is compact.
\end{main}

\smallskip

Finally, we remark that there are similar results to \hyperlink{main:exi}{Main~Theorem~1} and \hyperlink{main:nonexi}{Main~Theorem~2}, in the context of Abrikosov lattices, by Sigal and Tzaneteas; see \cites{ST12,S13}.  Abrikosov lattices are $\Z^2$-(gauge)-periodic solutions to the Ginzburg--Landau equations on the euclidean plane, $\rl^2$.  Here $\rl^2$ is viewed as the universal cover of the flat torus, and $\Z^2$ acts via deck transformations.   Furthermore, it came to our attention during the preparation of this paper that Chouchkov et al. has generalized these results to surfaces of higher genus with hyperbolic metrics \cite{CERS17}.

\smallskip

This paper is organized as follows.  \Cref{sec:glr} is a brief introduction to the Ginzburg--Landau theory on compact surfaces.  In \Cref{sec:exi} we establish \hyperlink{main:exi}{Main~Theorem~1} by proving that \hyperref[eq:glf]{Ginzburg--Landau free energy \eqref{eq:glf}} is a Palais--Smale function on the space of gauge equivalence classes of fields.  In \Cref{sec:bounds} we prove a technical lemma about the solutions of the Ginzburg--Landau equations; this is used in \Cref{sec:pr2} to prove \hyperlink{main:nonexi}{Main~Theorem~2}.  In \Cref{sec:cpt} we combine results from the previous sections and a theorem of Feehan and Maridakis to prove \hyperlink{main:fin}{Main~Theorem~4}.

\smallskip

\subsection*{Acknowledgment}  I wish to thank my advisor, Tom Parker, for his advice during the preparation of this paper.  I greatly benefited from the discussions with Paul Feehan and Manos Maridakis about the \L{}ojasiewicz--Simon inequality.  I am also grateful for the help of Benoit Charbonneau.

\smallskip

\section{Ginzburg--Landau equations on compact surfaces}
\label{sec:glr}

As is standard in gauge theory we work with the Sobolev $L_k^p$-completions of connection and fields.  For the rest of this paper we fix a connection $\nabla^0$ that satisfies \cref{eq:normal}.  The Sobolev norms are defined via the Levi-Civita connection of $\Sigma$ and the connection $\nabla^0$.  The norm of any $L_k^p$ space are denoted by $\|.\|_{k,p}$.  Furthermore $\|.\|_p$ stands for $\|.\|_{0,p}$, and $\langle.|.\rangle$ stands for the real $L^2$ inner products.  In dimension 2, $L_1^2$ embeds in $L^4$.  The weakest Sobolev norm in which the \hyperref[eq:glf]{energy \eqref{eq:glf}} is $C^1$ (in fact analytic) is $L_1^2$.  Thus let $\C_L$ be the $L_1^2$-closure of the affine space of smooth unitary connections on $L$ and $\Omega^0_L$ be the $L_1^2$-closure of the vector space of smooth sections of $L$.  Similarly, let $\Omega^k$ and $\Omega^k_L$ be the $L_1^2$-closure of $k$-forms and $L$-valued $k$-forms, respectively.  Note that $\C_L$ is now an affine space over $i \Omega^1$.  The configuration space $\C_L \oplus \Omega^0_L$ is an affine space over $i \Omega^1 \oplus \Omega^0_L$.  The gauge group $\G$ is the $L_2^2$-closure of $\Aut \left( L \right)$ in the $L_2^2$-topology.  The gauge group is canonically isomorphic to the infinite dimensional Abelian Lie group $L_2^2 \left( \Sigma, \U \left( 1 \right) \right)$ whose Lie algebra is $L_2^2 \left( \Sigma; i \rl \right)$.  Elements $g \in \G$ act on smooth pairs $\left( \nabla, \upphi \right) \in \C_L \oplus \Omega^0_L$ via
\begin{equation}
g \left( \nabla, \upphi \right) = \left( g \circ \nabla \circ g^{-1}, g \upphi \right) = \left( \nabla + g  d g^{-1}, g \upphi \right),
\end{equation}
which defines a smooth action of $\G$ on $\C_L \oplus \Omega^0_L$.

\smallskip

The \hyperref[eq:glf]{energy \eqref{eq:glf}} extends to a smooth function on $\C_L \oplus \Omega^0_L$ which now can be written as
\begin{equation}
\E_{\alpha, \beta}^{F_0} \left( \nabla, \upphi \right) = \tfrac{1}{2} \| F_\nabla - F_0 \|_2^2 + \tfrac{1}{2} \| \nabla \upphi \|_2^2 - \tfrac{\alpha}{2} \| \upphi \|_2^2 + \tfrac{\beta}{4} \| \upphi \|_4^4.
\end{equation}
Each critical point $\left( \nabla, \upphi \right) \in \C_L \oplus \Omega^0_L$ of the \hyperref[eq:glf]{energy \eqref{eq:glf}} satisfies the following equations:
\begin{subequations}
\begin{align}
\langle db | F_\nabla - F_0 \rangle + \langle b | i \: \Im \left( h \left( \upphi, \nabla \upphi \right) \right) \rangle &= 0 \qquad \forall b \in i \Omega^1  \label{eq:wgl1}  \\
\langle \nabla \uppsi | \nabla \upphi \rangle + \langle \uppsi | - \alpha \upphi + \beta |\upphi|^2 \upphi \rangle &= 0 \qquad \forall \uppsi \in \Omega^0_L.  \label{eq:wgl2}
\end{align}
\end{subequations}
If the pair $\left( \nabla, \upphi \right)$ is in $L_2^2$, then \cref{eq:wgl1,eq:wgl2} are equivalent to the {\em Ginzburg--Landau equations}:
\begin{subequations}
\begin{align}
d^* \left( F_\nabla - F_0 \right) + i \: \Im \left( h \left( \upphi, \nabla \upphi \right) \right) &= 0  \label{eq:gl1}  \\
\nabla^* \nabla \upphi - \alpha \upphi + \beta |\upphi|^2 \upphi &= 0,  \label{eq:gl2}
\end{align}
\end{subequations}
with the boundary conditions (see \cite{E98}*{page~345}):
\begin{subequations}
\begin{align}
F_\nabla &= F_0  \qquad \mbox{everywhere on }\partial \Sigma,  \label{eq:neum1}  \\
\nabla_n \upphi &= 0  \qquad \:\: \forall n \perp \partial \Sigma.  \label{eq:neum2}
\end{align}
\end{subequations}
If one writes $\nabla - \nabla^0 = a \in i \Omega^1$, then \cref{eq:neum1} becomes
\begin{equation}
da = 0  \qquad \mbox{everywhere on }\partial \Sigma.  \label{eq:newneum1}
\end{equation}

De Gennes showed that conditions \eqref{eq:neum1} and \eqref{eq:neum2} describe an interface with superconducting material on one side and insulator or vacuum on the other (see \cite{dG99}*{page~229}).  \Cref{eq:neum1} corresponds to the continuity of the magnetic field on the boundary.  The 1-form
\begin{equation}
j_{\nabla, \upphi} = \Im \left( h \left( \upphi, \nabla \upphi \right) \right)  \label{eq:supercurrent}
\end{equation}
is the {\em supercurrent}, that is the current of superconducting Cooper pairs, thus \cref{eq:neum1} means that no supercurrent is leaving or entering the surface.   Since in mathematics \cref{eq:neum2,eq:newneum1} are also called the Neumann boundary conditions, we refer to them as the {\em de Gennes--Neumann boundary conditions}.  

\smallskip

Chern--Weil theory and \cref{eq:neum1} implies that
\begin{equation}
\upphi_0 = \int\limits_\Sigma F_0 = \int\limits_\Sigma F_\nabla.  \label{eq:flux}
\end{equation}
The interpretation of \cref{eq:flux} is that the {\em magnetic flux} through the surface, $\Sigma$, is fixed.

\smallskip

By \cite{JT80}*{Theorem~2.4} every critical point is gauge equivalent to a smooth one, which in turn is a solution of the \cref{eq:gl1,eq:gl2}.  Thus gauge equivalence classes of critical points of the \hyperref[eq:glf]{energy \eqref{eq:glf}} are in one-to-one correspondence with gauge equivalence classes of smooth solutions of the \cref{eq:gl1,eq:gl2}.

\smallskip

\section{Existence}
\label{sec:exi}

To prove the existence of irreducible solutions of \cref{eq:gl1,eq:gl2} when $\alpha > \lambda_1$ we show two things: (i) absolute minimizers of the \hyperref[eq:glf]{energy \eqref{eq:glf}} exist for all $\alpha, \beta \in \rl_+$, and (ii) if $\alpha > \lambda_1$, then reducible (normal phase) solutions are not absolute minimizers.  It follows that the absolute minimizers are irreducible.

\smallskip

We say that a function $\E$ on a Banach space satisfies the {\em Palais--Smale Compactness Property} if every sequence, on which $\E$ is bounded and $D \E$ (the derivative of $\E$) converges to zero in the dual of the Banach space, has a convergent subsequence.  The next lemma shows that the \hyperref[eq:glf]{energy \eqref{eq:glf}} satisfies a gauged version of the Palais--Smale Compactness Property.

\begin{lem}
\label{lem:gps}
Let $\{ \left( a_k, \upphi_k \right) \}_{k \in \N}$ be a sequence in $i \Omega^1 \oplus \Omega^0_L$ such that $\{ \E_{\alpha, \beta}^{F_0} \left( \nabla^0 + a_k, \upphi_k \right) \}_{k \in \N}$ is bounded and $\{ D\E_{\alpha, \beta}^{F_0} \left( \nabla^0 + a_k, \upphi_k \right) \}_{k \in \N}$ converges to zero in the dual of $i \Omega^1 \oplus \Omega^0_L$.  Then there are sequences of natural numbers, $\{ k_l \}_{l \in \N}$, and gauge transformations, $\{ g_l \in \G \}_{l \in \N}$, such that $\{ g_l \left( \nabla^0 + a_{k_l}, \upphi_{k_l} \right) \}_{l \in \N}$ is convergent in $\C_L \oplus \Omega^0_L$.  
\end{lem}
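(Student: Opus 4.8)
The plan is to exploit the standard gauge-fixing trick in two dimensions — pass to the Coulomb (Hodge) gauge on the $1$-form part and then run an elliptic bootstrap — while using the energy bound to control the scalar field. Let me sketch the steps.

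\emph{Step 1: Gauge fixing.} For each $k$, write $a_k = a_k^h + d^*\xi_k + d\eta_k$ for the Hodge decomposition on $\Omega^1$, where $a_k^h$ is harmonic. Since $b \mapsto \langle db \mid F_{\nabla^{\nul}+a_k}-F_{\nul}\rangle$ already kills the exact-and-harmonic part of $a_k$ (curvature depends only on $da_k = dd^*\xi_k$), the gauge transformation generated by $-\eta_k$ removes the $d\eta_k$ summand; the harmonic part $a_k^h$ lives in the finite-dimensional torus $H^1(\Sigma;\rl)/H^1(\Sigma;2\pi\Z)$, so after a further (large) gauge transformation we may assume $a_k^h$ ranges in a bounded fundamental domain. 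Replacing $(a_k,\phi_k)$ by this gauge-equivalent sequence, we may assume $d^* a_k = 0$, $a_k^h$ bounded, and (if $\partial\Sigma\neq\emptyset$) the boundary condition $\iota_n a_k = 0$, so that $\|a_k\|_{1,2}$ is controlled by $\|da_k\|_2$ plus the bounded harmonic part.

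\emph{Step 2: A priori $W^{1,2}$ bounds.} The energy bound gives $\|F_{\nabla^{\nul}+a_k}-F_\nul\|_2$, $\|\nabla^{\nul+a_k}\phi_k\|_2$, and $\|\phi_k\|_4$ bounded, the last two after absorbing the $-\tfrac{\alpha}{2}\|\phi_k\|_2^2$ term into the $\tfrac{\beta}{4}\|\phi_k\|_4^4$ term via Young's inequality (here $\beta>0$ is used). Since $da_k = F_{\nabla^{\nul}+a_k}-F_\nul$ (the curvatures differ by $da_k$), Step 1 bounds $\|a_k\|_{1,2}$. Then $W^{1,2}\hookrightarrow L^4$ in dimension $2$ gives $\|a_k\|_4$ bounded, so $\|\nabla^{\nul}\phi_k\|_2 \leqslant \|\nabla^{\nul+a_k}\phi_k\|_2 + \|a_k\phi_k\|_2 \leqslant \text{const} + \|a_k\|_4\|\phi_k\|_4$ is bounded, hence $\|\phi_k\|_{1,2}$ is bounded.

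\emph{Step 3: Weak limit and upgrade via the equations.} Extract a subsequence with $a_{k_l}\rightharpoonup a$ in $W^{1,2}$ and $\phi_{k_l}\rightharpoonup\phi$ in $W^{1,2}$, hence strongly in $L^4$ (Rellich). Now use that $D\E \to 0$: testing \eqref{eq:wgl1} against $b = d^*(F_{\nabla^{\nul}+a_{k_l}}-F_\nul)$-type directions (equivalently, writing the gauge-fixed equation $\Delta a_{k_l} = -i\,\Im(h(\phi_{k_l},\nabla^{\nul+a_{k_l}}\phi_{k_l})) + o(1)$ in the dual norm) and \eqref{eq:wgl2} against $\psi = \phi_{k_l}$-type directions, the right-hand sides are bounded in $L^2$ (using the $L^4$ bounds), so elliptic regularity for $\Delta$ on $1$-forms with the Coulomb/Neumann boundary condition and for $\nabla^{\nul\,*}\nabla^{\nul}$ with Neumann condition promotes the sequences to bounded sequences in $W^{2,2}$ modulo an $o(1)$ error. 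A second extraction then gives strong convergence in $W^{1,2}$ by Rellich, which is the claim (take $g_l$ to be the composition of the gauge transformations from Step 1).

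\emph{Main obstacle.} The delicate point is Step 3: the derivative $D\E$ converges to zero only in the \emph{dual} of $i\Omega^1\oplus\Omega^0_L$, i.e.\ only in $W^{-1,2}$, not in $L^2$. One must argue that in the Coulomb gauge the quadratic-in-derivatives nonlinearities ($a_k\phi_k$ inside $\nabla^{\nul+a_k}\phi_k$, and the cubic term $|\phi_k|^2\phi_k$) actually lie in $L^2$ thanks to the $L^4$ bounds of Step 2 — so that the equations read $(\text{elliptic operator})(a_k,\phi_k) = (L^2\text{-bounded}) + (\text{something small in } W^{-1,2})$ — and then bootstrap. The other technical care is the boundary: one must check that the Coulomb gauge is compatible with the de Gennes--Neumann conditions \eqref{eq:neum1}--\eqref{eq:newneum1} so that the relevant Laplacians have the elliptic estimates with those boundary conditions. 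With those two points handled, compactness follows from Rellich as above.
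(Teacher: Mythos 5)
Your proposal is correct and follows essentially the same route as the paper's proof: gauge fixing to Coulomb gauge (you do it by hand via the Hodge decomposition where the paper invokes Uhlenbeck's compactness theorem, which for a $\U(1)$-bundle amounts to the same thing), the same a priori $W^{1,2}$ bounds extracted from the energy, and then elliptic regularity applied to the almost-critical-point equations --- with the part of the right-hand side that is only small in the dual norm split off --- followed by Rellich compactness. Your ``bounded in $W^{2,2}$ modulo an $o(1)$ error'' device is precisely the paper's trick of writing $D\E_{\scriptscriptstyle \alpha,\beta}^{\scriptscriptstyle F_{\nul}}\left( \nabla^{\nul}+a_k,\phi_k \right)$ as the $W^{1,2}$-Riesz representative of a pair $\left( b_k,\psi_k \right) \to 0$ and running the elliptic estimate on $\left( a_k-b_k,\phi_k-\psi_k \right)$.
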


\begin{proof}
Since smooth pairs in $i \Omega^1 \oplus \Omega^0_L$ are dense and the \hyperref[eq:glf]{energy \eqref{eq:glf}} is a continuous (in fact an analytic) function on $\C_L \oplus \Omega^0_L$, it is enough to prove the statement for smooth sequences.

One can complete the square in the last two terms of the \hyperref[eq:glf]{energy \eqref{eq:glf}} to get a lower bound in terms of the coupling constants and the area:
\begin{align}
\E_{\alpha, \beta}^{F_0} \left( \nabla^0 + a_k, \upphi_k \right) &= \tfrac{1}{2} \int\limits_\Sigma \left( \left| F_{\nabla^0 + a_k} - F_0 \right|^2 + \left| \left( \nabla^0 + a_k \right) \upphi_k \right|^2 + \tfrac{\beta}{2} \left( \tfrac{\alpha}{\beta} - |\upphi_k|^2 \right)^2 - \tfrac{\alpha^2}{2 \beta} \right) \omega  \\
&= \tfrac{1}{2} \| F_{\nabla^0 + a_k} - F_0 \|^2_2 + \tfrac{1}{2} \| \left( \nabla^0 + a_k \right) \upphi_k \|^2_2 + \tfrac{\beta}{4} \| \tfrac{\alpha}{\beta} - |\upphi_k|^2 \|^2_2 - \tfrac{\alpha^2}{4 \beta} \Ar \left( \Sigma \right).   \label{eq:glf2}
\end{align}
The right hand side of \eqref{eq:glf2} is greater than or equal to the constant $- \tfrac{\alpha^2}{2 \beta} \Ar \left( \Sigma \right)$.  Since all terms are positive or constant, they are bounded individually.  \Cref{eq:glf2}, and the hypotheses of the theorem give us the following:

\smallskip

\begin{enumerate}
  \item[(1)]  Since $\{ F_{\nabla^0 + a_k} \}_{k \in \N}$ is bounded in $L^2$ (by \cref{eq:glf2} and the hypotheses of the theorem) the $\{ \nabla^0 + a_k \}_{k \in \N}$ is has a subsequence, which is gauge equivalent to a weakly convergent sequence in $\C_L$, by \cite{U82}*{Theorem~3.6}.  Thus, after replacing the original sequence with a gauge equivalent one, and then taking a subsequence, we can assume that $\{ a_k \}_{k \in \N}$ is weakly convergent and hence bounded in $i \Omega^1$.  We can still assume smoothness, due to the density of smooth fields and gauge transformations.  By the Sobolev inequality, $\{ a_k \}_{k \in \N}$ is also bounded in $L^4$.
  \item[(2)]  We can also require the Coulomb gauge fixing condition to hold (for a subsequence), that is
\begin{equation}
d^* a_k = 0  \quad  \&  \quad  a (n) = 0  \quad  \forall n \perp \partial \Sigma,  \label{eq:coulomb}
\end{equation}
by to the following argument:  For all $k \in \N$ let $f_k \in \Omega^0$ be the unique solution of the equations
\begin{align}
\langle df | d f_k \rangle &= \langle f | i d^* a_k \rangle  \qquad  \forall f \in \Omega^0,  \label{eq:poi}  \\
df_k (n) &= 0  \forall n \perp \partial \Sigma.
\end{align}
Since \cref{eq:poi} is the weak formulation of a Poisson-type equation with Neumann boundary conditions, solutions exist and unique, up to additive constants; cf. \cite{W04}*{Chapter~1}.  As $\{ d^* a_k \}_{k \in \N}$ is smooth and bounded in $L^2$, by elliptic regularity $\{ f_k \}_{k \in \N}$ is smooth and bounded in $L_2^2$, and thus has a weakly convergent subsequence, $\{ f_{k_l} \}_{l \in \N}$, in $L_2^2$.  Setting $g_l = \exp \left( i f_{k_l} \right)$ for all $l \in \N$ and replacing $\{ \nabla^0 + a_k \}_{k \in \N}$ with $\{ g_l \left( \nabla^0 + a_{k_l} \right) = \nabla^0 + a_{k_l} + i df_{k_l} \}_{l \in \N}$ gives us \cref{eq:coulomb}, while not ruining the weak convergence and the smoothness of the sequence.
  \item[(3)] The integrals $\int\limits_\Sigma \left( \tfrac{\alpha}{\beta} - |\upphi_k|^2 \right)^2 \omega$ are uniformly bounded.  Since $\Sigma$ has finite volume, Jensen's inequality implies that
  \begin{equation}
  \left( \tfrac{\alpha}{\beta} \Ar \left( \Sigma \right) - \| \upphi_k \|^2_2 \right)^2 \leqslant \Ar \left( \Sigma \right) \int\limits_\Sigma \left( \tfrac{\alpha}{\beta} - |\upphi_k|^2 \right)^2 \omega.
  \end{equation}
  Thus $\{ \upphi_k \}_{k \in \N}$ is bounded in $L^2$.  But then $\{ \upphi_k \}_{k \in \N}$ is bounded in $L^4$, because
  \begin{equation}
  \| \upphi_k \|^4_4 = \int\limits_\Sigma \left( \tfrac{\alpha}{\beta} - |\upphi_k|^2 \right)^2 \omega -  \tfrac{\alpha^2}{\beta^2} \Ar \left( \Sigma \right) + 2 \| \upphi_k \|^2_2.
  \end{equation}
  \item[(4)] Combining (1), (3), and H\"older's inequality shows that $\{ \|a_k \upphi_k \|_2 \}_{k \in \N}$ is bounded.
  \item[(5)] Combining (4) and the fact that $\{ \| \left( \nabla^0 + a_k \right) \upphi_k \|_2 \}_{k \in \N}$ is bounded (by \cref{eq:glf2} and the hypotheses of the theorem) we see that $\{ \nabla^0 \upphi_k \}_{k \in \N}$ is bounded in $L^2$.
  \item[(6)] Combining (1) and (5) proves that $\{ \left( a_k, \upphi_k \right) \}_{k \in \N}$ is bounded in $i \Omega^1 \oplus \Omega^0_L$.  The embedding $L_1^2 \hookrightarrow L^p$ is completely continuous for all $p \geqslant 2$; cf. \cite{JT80}*{Proposition~2.6}.  Thus $\{ \left( a_k, \upphi_k \right) \}_{k \in \N}$ is also bounded in $L^p$ for each $p \geqslant 2$.
\end{enumerate}

\smallskip

Let $\nabla^{\rm LC} $ be the Levi-Civita connection acting on elements of $i \Omega^1$.  Since $i \Omega^1 \oplus \Omega^0_L$ is Hilbert space, there is a sequence, $\{ \left( b_k, \uppsi_k \right) \}_{k \in \N}$, such that $D\E_{\alpha, \beta}^{F_0} \left( \nabla^0 + a_k, \upphi_k \right)$ is the $L_1^2$-dual of $\left( b_k, \uppsi_k \right)$, that is for every $\left( b, \uppsi \right) \in i \Omega^1 \oplus \Omega^0_L$
\begin{equation}
D\E_{\alpha, \beta}^{F_0} \left( \nabla^0 + a_k, \upphi_k \right) \left( b, \uppsi \right) = \langle \nabla^{\rm LC} b | \nabla^{\rm LC} b_k \rangle + \langle b | b_k \rangle + \langle \nabla^0 \uppsi | \nabla^0 \uppsi_k \rangle + \langle \uppsi | \uppsi_k \rangle.  \label{eq:deps1}
\end{equation}
and thus
\begin{equation}
\| D\E_{\alpha, \beta}^{F_0} \left( \nabla^0 + a_k, \upphi_k \right) \|_{-1,2} = \| \left( b_k, \uppsi_k \right) \|_{1,2}  \label{eq:norm}
\end{equation}
for all $k \in \N$.  By hypothesis, the left hand side of \cref{eq:norm} converges to zero, which implies that $\{ \left( b_k, \uppsi_k \right) \}_{k \in \N}$ converges to zero in $i \Omega^1 \oplus \Omega^0_L$ as well.

On the other hand, by using the definition of the derivative we get
\begin{align}
D\E_{\alpha, \beta}^{F_0} \left( \nabla^0 + a_k, \upphi_k \right) \left( b, \uppsi \right) &= \tfrac{d}{dt} \left( \E_{\alpha, \beta}^{F_0} \left( \nabla^0 + a_k + t \: b, \upphi_k + t \: \uppsi \right) \right) \big|_{t = 0}  \\
&= \langle db | da_k \rangle + \langle b | i \: \Im \left( h \left( \upphi_k, \left( \nabla^0 + a_k \right) \upphi_k \right) \right) \rangle  \\
&+ \langle \left( \nabla^0 + a_k \right) \uppsi | \left( \nabla^0 + a_k \right) \upphi_k \rangle - \alpha \langle \uppsi | \upphi_k \rangle + \beta \langle \uppsi | |\upphi_k |^2 \upphi_k \rangle.  \label{eq:deps2}
\end{align}
Combining \cref{eq:deps1,eq:deps2} gives us the equation for all $\left( b, \uppsi \right) \in i \Omega^1 \oplus \Omega^0_L$
\begin{align}
\langle \nabla^{\rm LC} b | \nabla^{\rm LC} b_k \rangle + \langle b | b_k \rangle + \langle \nabla^0 \uppsi | \nabla^0 \uppsi_k \rangle + \langle \uppsi | \uppsi_k \rangle &= \langle db | da_k \rangle  \\
&+ \langle b | i \: \Im \left( h \left( \upphi_k, \left( \nabla^0 + a_k \right) \upphi_k \right) \right) \rangle  \\
&+ \langle \left( \nabla^0 + a_k \right) \uppsi | \left( \nabla^0 + a_k \right) \upphi_k \rangle  \\
&- \alpha \langle \uppsi | \upphi_k \rangle + \beta \langle \uppsi | |\upphi_k |^2 \upphi_k \rangle,  \label{eq:deps3}
\end{align}
Since \cref{eq:deps3} is a linear, elliptic partial differential equation, and $\{ \left( a_k, \upphi_k \right) \}_{k \in \N}$ is smooth, $\{ \left( b_k, \uppsi_k \right) \}_{k \in \N}$ is also smooth, by elliptic regularity.  Let $\Delta = d^* d + d d^*$ be the Laplacian on forms and $\kappa$ is the Gauss curvature of the Riemannian metric of $\Sigma$.  By the Weitzenb\"ock identity, $\left( \nabla^{\rm LC} \right)^* \nabla^{\rm LC} = \Delta - \kappa$ on $i \Omega^1$.  Since both $\{ \left( a_k, \upphi_k \right) \}_{k \in \N}$ and $\{ \left( b_k, \uppsi_k \right) \}_{k \in \N}$ are smooth, we can integrate by parts in \cref{eq:deps3}.  Recall that $d^* a_k = 0$, and so $\Delta a_k = d^* d a_k$ for all $k \in \N$, to get the following equations for all $(b, \uppsi) \in i \Omega^1 \oplus \Omega^0_L$:
\begin{subequations}
\begin{align}
\langle b | \left( \Delta + 1 - \kappa \right) b_k - \left( \Delta + |\upphi_k|^2 \right) a_k - i \: \Im \left( h \left( \upphi_k, \nabla^0 \upphi_k \right) \right) \rangle + \partial\textnormal{-terms} &= 0,  \label{eq:bk1}  \\
\langle \uppsi | \left( \left( \nabla^0 \right)^* \nabla^0 + 1 \right) \uppsi_k - \left( \nabla^0 + a_k \right)^* \left( \nabla^0 + a_k \right) \upphi_k + \alpha \upphi_k - \beta |\upphi_k|^2 \upphi_k \rangle + \partial\textnormal{-terms} &= 0.  \label{eq:bk2}
\end{align}
\end{subequations}
Equations \cref{eq:bk1} and \eqref{eq:bk2} imply the following equations in the interior of $\Sigma$:
\begin{subequations}
\begin{align}
\Delta \left( a_k - b_k \right) &= - |\upphi_k|^2 a_k + i \: \Im \left( h \left( \nabla^0 \upphi_k, \upphi_k \right) \right) + \left( 1 - \kappa \right) b_k  \label{eq:dgl1}  \\
\left( \nabla^0 \right)^* \left( \nabla^0 \right) \left( \upphi_k - \uppsi_k \right) &= - a_k^* \left( \nabla^0 \upphi_k \right) - \left( \nabla^0 \right)^* \left( a_k \upphi_k \right) - |a_k|^2 \upphi_k + \alpha \upphi_k - \beta |\upphi_k|^2 \upphi_k + \uppsi_k.  \label{eq:dgl2}
\end{align}
\end{subequations}
Since $\Sigma$ is compact, $\kappa$ is a bounded function.  Thus observations (1)-(6) imply that the right hand sides of \cref{eq:dgl1,eq:dgl2} are bounded in $L^2$.  Since $\{ \left( a_k - b_k, \upphi_k - \uppsi_k \right) \}_{k \in \N}$ is also bounded in $L^2$, elliptic regularity implies that it is bounded in $L_2^2$ as well.  The embedding $L_2^2 \hookrightarrow L_1^2$ is completely continuous (see \cite{JT80}*{Proposition~2.6}), thus $\{ \left( a_k - b_k, \upphi_k - \uppsi_k \right) \}_{k \in \N}$ has a convergent subsequence in $i \Omega^1 \oplus \Omega^0_L$.  Since $\{ \left( b_k, \uppsi_k \right) \}_{k \in \N}$ converges to zero by the argument after \cref{eq:norm}, we conclude that $\{ \left( \nabla^0 + a_k, \upphi_k \right) \}_{k \in \N}$ has a convergent subsequence in $\C_L \oplus \Omega^0_L$.
\end{proof}

\smallskip

\begin{cor}
\label{cor:ach}
The infimum of \hyperref[eq:glf]{energy \eqref{eq:glf}} is achieved by some smooth field $\left( \nabla, \upphi \right) \in \C_L \oplus \Omega^0_L$.
\end{cor}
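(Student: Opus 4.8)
The plan is to deduce this from \Cref{lem:gps} together with Ekeland's variational principle. First, I would record that, by \eqref{eq:glf2}, the energy is bounded below: $\E_{\scriptscriptstyle \alpha, \beta}^{\scriptscriptstyle F_{\nul}} \geqslant - \tfrac{\alpha^2}{2 \beta} \Ar \left( \Sigma \right)$, so that $E_\ast := \inf \E_{\scriptscriptstyle \alpha, \beta}^{\scriptscriptstyle F_{\nul}}$ is a finite real number. Fixing the reference connection $\nabla^{\nul}$ identifies the configuration space $\C_L \oplus \Omega^0_L$ with the Hilbert space $i \Omega^1 \oplus \Omega^0_L$, on which $\E_{\scriptscriptstyle \alpha, \beta}^{\scriptscriptstyle F_{\nul}}$ becomes a $C^1$ (indeed analytic) functional that is bounded below.

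Next, I would invoke Ekeland's variational principle: since $\E_{\scriptscriptstyle \alpha, \beta}^{\scriptscriptstyle F_{\nul}}$ is $C^1$ and bounded below on the complete metric space $i \Omega^1 \oplus \Omega^0_L$, there is a minimizing sequence $\{ \left( a_k, \phi_k \right) \}_{k \in \N}$ with $\E_{\scriptscriptstyle \alpha, \beta}^{\scriptscriptstyle F_{\nul}} \left( \nabla^{\nul} + a_k, \phi_k \right) \to E_\ast$ and $\| D\E_{\scriptscriptstyle \alpha, \beta}^{\scriptscriptstyle F_{\nul}} \left( \nabla^{\nul} + a_k, \phi_k \right) \|_{-1,2} \to 0$. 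In particular this sequence has bounded energy and its derivative converges to zero in the dual, so it satisfies the hypotheses of \Cref{lem:gps}.

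Then I would apply \Cref{lem:gps} to extract a subsequence $\{ \left( a_{k_l}, \phi_{k_l} \right) \}_{l \in \N}$ and gauge transformations $\{ g_l \in \G \}_{l \in \N}$ with $g_l \left( \nabla^{\nul} + a_{k_l}, \phi_{k_l} \right) \to \left( \nabla, \phi \right)$ in $\C_L \oplus \Omega^0_L$. Since $\E_{\scriptscriptstyle \alpha, \beta}^{\scriptscriptstyle F_{\nul}}$ is gauge invariant and continuous on $\C_L \oplus \Omega^0_L$, one gets $\E_{\scriptscriptstyle \alpha, \beta}^{\scriptscriptstyle F_{\nul}} \left( \nabla, \phi \right) = \lim_{l \to \infty} \E_{\scriptscriptstyle \alpha, \beta}^{\scriptscriptstyle F_{\nul}} \left( g_l \left( \nabla^{\nul} + a_{k_l}, \phi_{k_l} \right) \right) = \lim_{l \to \infty} \E_{\scriptscriptstyle \alpha, \beta}^{\scriptscriptstyle F_{\nul}} \left( \nabla^{\nul} + a_{k_l}, \phi_{k_l} \right) = E_\ast$, so $\left( \nabla, \phi \right)$ attains the infimum.

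Finally, to obtain smoothness I would observe that an absolute minimizer is in particular a critical point, hence a weak solution of \cref{eq:wgl1,eq:wgl2}; by \cite[Theorem~II]{T80} it is gauge equivalent to a smooth field, which by gauge invariance of $\E_{\scriptscriptstyle \alpha, \beta}^{\scriptscriptstyle F_{\nul}}$ is again an absolute minimizer, now smooth. The one genuinely nontrivial point is the passage from an \emph{arbitrary} minimizing sequence to a Palais--Smale minimizing sequence: this is exactly what Ekeland's principle provides, and it is the step that relies on $\E_{\scriptscriptstyle \alpha, \beta}^{\scriptscriptstyle F_{\nul}}$ being $C^1$ on the whole space $i \Omega^1 \oplus \Omega^0_L$; everything else is bookkeeping with \Cref{lem:gps} and gauge invariance.
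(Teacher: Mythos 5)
Your proposal is correct and follows essentially the same route as the paper: the paper's citation of \cite[Proposition~2.6]{E74} is precisely Ekeland's variational principle applied to the minimizing sequence, after which \Cref{lem:gps}, gauge invariance, and \cite[Theorem~II]{T80} are used exactly as you describe. No substantive differences.
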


\begin{proof}
The \hyperref[eq:glf]{energy \eqref{eq:glf}} bounded below by \eqref{eq:glf2}, and hence its infimum is not $- \infty$.  Choose any absolute minimizer sequence $\{ \left( \nabla^0 + a_k, \upphi_k \right) \}_{k \in \N}$.  Using \cite{E74}*{Proposition~2.6}, we can assume that $\{ D \E_{\alpha, \beta}^{F_0} \left( \nabla^0 + a_k, \upphi_k \right) \}_{k \in \N}$ converges to zero in the dual of $i \Omega^1 \oplus \Omega^0_L$.  By \Cref{lem:gps}, there are sequences, $\{ k_l \}_{l \in \N}$ and $\{ g_l \}_{l \in \N}$, such that $\{ g_l \left( \nabla^0 + a_{k_l}, \upphi_{k_l} \right) \}_{l \in \N}$ converges in $\C_L \oplus \Omega^0_L$.  Since the \hyperref[eq:glf]{energy \eqref{eq:glf}} is gauge invariant and analytic, the limit is an absolute minimizer and hence a critical point of the \hyperref[eq:glf]{energy \eqref{eq:glf}}.  By \cite{JT80}*{Theorem~2.4}, all critical points are gauge equivalent to a smooth pair $\left( \nabla, \upphi \right) \in \C_L \oplus \Omega^0_L$.
\end{proof}

Finally, we show that the infimum cannot be a reducible (normal phase) solution when $\alpha > \lambda_1$.

\begin{lem}
\label{lem:nonmin}
When $\alpha > \lambda_1$, all minimizing solutions are irreducible.
\end{lem}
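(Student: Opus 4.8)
The plan is to show that, under the hypothesis $\alpha > \lambda_1$, the infimum of the \hyperref[eq:glf]{energy \eqref{eq:glf}} over $\C_L \oplus \Omega^0_L$ is strictly negative. Since every reducible solution is a normal phase solution and hence has energy exactly $0$ (as observed in \Cref{sec:intro}), no reducible pair can attain a negative infimum; consequently every minimizing solution --- and one exists by \Cref{cor:ach} --- must be irreducible.

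First I would extract a good test configuration from the definition \eqref{eq:lambda1} of $\lambda_1$. Since $\alpha > \lambda_1$, there is a unitary connection $\nabla'$ with $F_{\scriptscriptstyle \nabla'} = F_{\nul}$ and a section $\phi$ with $\| \phi \|_2 = 1$ such that $\| \nabla' \phi \|_2^2 < \alpha$. By density of smooth configurations and the Sobolev embedding $W^{1,2} \hookrightarrow L^4$ in dimension two, we may take $\nabla'$ and $\phi$ smooth; in particular $\phi \in \Omega^0_L$, $\nabla' \in \C_L$, and $\| \phi \|_4 < \infty$, so the \hyperref[eq:glf]{energy \eqref{eq:glf}} of the family $\left( \nabla', t \phi \right)$, $t > 0$, is finite and a smooth function of $t$.

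Next I would simply evaluate it. Because $F_{\scriptscriptstyle \nabla'} - F_{\nul} = 0$ the curvature term vanishes, and using $\| \phi \|_2 = 1$,
\begin{equation}
\E_{\scriptscriptstyle \alpha, \beta}^{\scriptscriptstyle F_{\nul}} \left( \nabla', t \phi \right) \ = \ \tfrac{t^2}{2} \left( \| \nabla' \phi \|_2^2 - \alpha \right) + \tfrac{\beta t^4}{4} \| \phi \|_4^4.  \nonumber
\end{equation}
The coefficient of $t^2$ is negative by the choice of $\left( \nabla', \phi \right)$, so for all sufficiently small $t > 0$ the right-hand side is negative. Hence $\inf \E_{\scriptscriptstyle \alpha, \beta}^{\scriptscriptstyle F_{\nul}} < 0$, and the reasoning of the first paragraph completes the proof.

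This argument is elementary; the only step requiring a moment's care is the passage to a smooth near-optimizer $\left( \nabla', \phi \right)$ for $\lambda_1$, which is needed so that the quartic term $\| \phi \|_4^4$ in the test energy is actually finite --- and this is precisely where the two-dimensionality of $\Sigma$, through $W^{1,2} \hookrightarrow L^4$, enters.
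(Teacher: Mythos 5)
Your proposal is correct and follows essentially the same route as the paper: pick a near-optimizer $\left( \nabla', \phi \right)$ for $\lambda_1$ with $\| \nabla' \phi \|_2^2 < \alpha \| \phi \|_2^2$, scale the section by $t$, and observe the energy is negative for small $t$, so the minimizer produced by \Cref{cor:ach} cannot be reducible. The only cosmetic differences are the normalization $\| \phi \|_2 = 1$ and your remark about smoothing the test pair, which is not strictly needed since $\Omega^0_L$ is already the $W^{1,2}$-completion and hence embeds in $L^4$.
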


\begin{proof}
Since all reducible solutions satisfy $\E_{\alpha, \beta}^{F_0} = 0$, it is enough to show that the minimum of the \hyperref[eq:glf]{energy \eqref{eq:glf}} is negative.  Assume that $\alpha > \lambda_1$, and pick a positive $\epsilon$ which is less than $\alpha - \lambda_1$.  Using the definition of $\lambda_1$, \cref{eq:lambda1}, we get that there exist a normal phase solution, $\left( \nabla^0, 0 \right)$, and a non-zero section, $\upphi \in \Omega^0_L$, such that
\begin{equation}
\| \nabla^0 \upphi \|^2_2 \leqslant \left( \lambda_1 + \epsilon \right) \| \upphi \|^2_2.
\end{equation}
Thus the pair $\left( \nabla^0, t \: \upphi \right) \in \C_L \oplus \Omega^0_L$ satisfies
\begin{equation}
\E_{\alpha, \beta}^{F_0} \left( \nabla^0, t \upphi \right) = \tfrac{t^2}{2} \| \nabla^0 \upphi \|^2_2 - t^2 \tfrac{\alpha}{2} \| \upphi \|^2_2 + t^4 \tfrac{\beta}{4} \| \upphi \|^4_4 \leqslant t^2 \tfrac{1}{2} \left( \lambda_1 + \epsilon - \alpha \right) \| \upphi \|^2_2 + t^4 \tfrac{\beta}{2} \| \upphi \|^4_4,
\end{equation}
which is negative for small enough $t$ since $\lambda_1 + \epsilon - \alpha < 0$.  Thus the minimum of the \hyperref[eq:glf]{energy \eqref{eq:glf}} is negative, and the hence absolute minimizer found in \Cref{cor:ach} is irreducible.
\end{proof}

\smallskip

Together \Cref{cor:ach} and \Cref{lem:nonmin} prove \hyperlink{main:exi}{Main~Theorem~1}.

\smallskip

\begin{rem}
By the definition of $\lambda_1$, \cref{eq:lambda1}, it is easy to see (using the K\"ahler identities) that 
\begin{equation}
\lambda_1 \leqslant \sup\limits_\Sigma \left( | F_0 | \right) = \| F_0 \|_\infty.
\end{equation}
Thus \hyperlink{main:exi}{Main~Theorem~1}, a fortiori, still holds if $\lambda_1$ is replaced with the (computationally simpler) quantity $\| F_0 \|_\infty$ in \cref{ineq:cond1}.
\end{rem}

\smallskip

\section{Non-existence}
\label{sec:nonexi}

\subsection{Bounds on solutions}
\label{sec:bounds}

First we prove a lemma about critical points which generalizes a result of Taubes \cite{T80}*{Lemma~3.2}, from the $\beta = \tfrac{1}{2}$ case to all positive values of $\beta$.

\smallskip

\begin{lem}
\label{lem:bounds}
There exists a positive number $C^{F_0}_{\alpha, \beta}$ such that for all critical points, $\left( \nabla, \upphi \right) \in \C_L \oplus \Omega^0_L$, of \hyperref[eq:glf]{energy \eqref{eq:glf}} we have the following inequalities:
\begin{subequations}
\begin{align}
|\upphi| & \leqslant \sqrt{\tfrac{\alpha}{\beta}}  \label[ineq]{ineq:psibound}  \\
\| F_\nabla - F_0 \|_2 & \leqslant C^{F_0}_{\alpha, \beta},  \label[ineq]{ineq:l2fbound}
\end{align}
\end{subequations}
If equality holds at any point in \cref{ineq:psibound}, then it holds everywhere and $L$ is trivial, both $\nabla$ and $\nabla^0$ are flat, and $\nabla \upphi = 0$ everywhere.

Moreover, if $\left( \nabla, \upphi \right)$ is irreducible and $F_0$ solves Maxwell's equation (that is $\upphi \neq 0$ and $d^* F_0 = 0$), then
\begin{equation}
|F_\nabla| \leqslant \max \left\{ \alpha, \tfrac{\alpha}{2 \beta} \right\} - \tfrac{1}{2} |\upphi|^2.  \label[ineq]{ineq:fbound}
\end{equation}
\end{lem}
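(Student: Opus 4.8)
The plan is to reduce to smooth solutions of \cref{eq:gl1,eq:gl2} satisfying \cref{eq:neum1,eq:neum2}: by \cite[Theorem~II]{T80} every critical point is gauge equivalent to such a solution, and the quantities $|\phi|$, $\| F_{\scriptscriptstyle \nabla} - F_{\nul} \|_2$ and $|F_{\scriptscriptstyle \nabla}|$ are gauge invariant, so this costs nothing. The main tool is the pointwise Bochner--Kato identity for $|\phi|^2$: taking the real part of the pointwise inner product of \cref{eq:gl2} with $\phi$ and using $\tfrac{1}{2} \Delta |\phi|^2 = \Re \bigl( h ( \nabla^* \nabla \phi, \phi ) \bigr) - |\nabla \phi|^2$, where $\Delta = d^* d$ is the nonnegative Laplacian on functions, gives
\begin{equation}
\Delta |\phi|^2 + 2 |\nabla \phi|^2 \ = \ 2 \alpha |\phi|^2 - 2 \beta |\phi|^4.  \nonumber
\end{equation}
Hence $u = |\phi|^2 - \tfrac{\alpha}{\beta}$ satisfies $\Delta u + 2 \beta |\phi|^2 u = - 2 |\nabla \phi|^2 \leqslant 0$, and on $\partial \Sigma$ one has $\partial_n u = 2 \Re \bigl( h ( \nabla_n \phi, \phi ) \bigr) = 0$ by \cref{eq:neum2}. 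The strong maximum principle together with the Hopf boundary point lemma --- applicable here because the zeroth order coefficient $2 \beta |\phi|^2$ is nonnegative --- then forces $u \leqslant 0$, which is \cref{ineq:psibound}; moreover, if $u$ vanishes at one point then $u \equiv 0$, i.e.\ $|\phi|^2 \equiv \tfrac{\alpha}{\beta}$, so $\phi$ is nowhere vanishing and $L$ is trivial.

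For \cref{ineq:l2fbound} I would first take the pointwise inner product of \cref{eq:gl2} with $\phi$ and integrate; the boundary term drops by \cref{eq:neum2}, leaving $\| \nabla \phi \|_2^2 = \alpha \| \phi \|_2^2 - \beta \| \phi \|_4^4 \leqslant \tfrac{\alpha^2}{\beta} \Ar ( \Sigma )$ after invoking \cref{ineq:psibound}. Writing $F_{\scriptscriptstyle \nabla} - F_{\nul} = i \, g \, \omega$ for a real function $g$, so that $\| g \|_2 = \| F_{\scriptscriptstyle \nabla} - F_{\nul} \|_2$, the Hodge star turns \cref{eq:gl1} into $\ast \, dg = \Im \bigl( h ( \phi, \nabla \phi ) \bigr)$ up to sign, hence $\| dg \|_2 \leqslant \bigl( \sup_\Sigma |\phi| \bigr) \| \nabla \phi \|_2 \leqslant \sqrt{\tfrac{\alpha}{\beta}} \, \tfrac{\alpha}{\sqrt{\beta}} \sqrt{\Ar ( \Sigma )}$. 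Since $F_{\scriptscriptstyle \nabla} - F_{\nul} = d ( \nabla - \nabla^{\nul} )$ is exact and, by \cref{eq:neum1}, vanishes on $\partial \Sigma$, the function $g$ vanishes on $\partial \Sigma$, or has zero mean when $\partial \Sigma = \emptyset$ (Stokes); either way a Poincar\'e inequality yields $\| F_{\scriptscriptstyle \nabla} - F_{\nul} \|_2 = \| g \|_2 \leqslant C_P \| dg \|_2$, giving \cref{ineq:l2fbound} with $C^{\scriptscriptstyle F_{\nul}}_{\scriptscriptstyle \alpha, \beta} = C_P \, \alpha^{3/2} \beta^{-1} \Ar ( \Sigma )^{1/2}$ for a geometric constant $C_P$.

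For \cref{ineq:fbound} let $b$ be the real function with $F_{\scriptscriptstyle \nabla} = i \, b \, \omega$, so $|F_{\scriptscriptstyle \nabla}| = |b|$. A Hodge-star computation applied to \cref{eq:gl1}, using $d^* F_{\nul} = 0$ --- which in dimension two already makes $\ast F_{\nul}$, hence $|F_{\nul}| = B_{\nul}$, constant --- yields the scalar equation
\begin{equation}
\Delta b + |\phi|^2 b \ = \ - 2 \, \Im \bigl( h ( \nabla_1 \phi, \nabla_2 \phi ) \bigr)  \nonumber
\end{equation}
in an oriented orthonormal frame $( e_1, e_2 )$, whose right-hand side has absolute value at most $|\nabla \phi|^2$. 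Adding $\pm$ this equation to one half of the Bochner identity, the functions $w_{\pm} = \pm b + \tfrac{1}{2} |\phi|^2$ satisfy $\Delta w_{\pm} \leqslant - |\phi|^2 \bigl( w_{\pm} - \alpha + ( \beta - \tfrac{1}{2} ) |\phi|^2 \bigr)$. Since $0 \leqslant |\phi|^2 \leqslant \tfrac{\alpha}{\beta}$ by \cref{ineq:psibound}, an elementary estimate gives $\alpha - ( \beta - \tfrac{1}{2} ) |\phi|^2 \leqslant M$ with $M = \max \{ \alpha, \tfrac{\alpha}{2 \beta} \}$, so $w_{\pm} - M$ is a subsolution of $\Delta + |\phi|^2$. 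The maximum principle then forces $w_{\pm} \leqslant M$, and as $\tfrac{1}{2} |\phi|^2 + |F_{\scriptscriptstyle \nabla}| = \max \{ w_+, w_- \}$ this is exactly \cref{ineq:fbound}.

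I expect the last maximum principle to be the main obstacle, for two reasons. First, the potential $|\phi|^2$ degenerates on the zero set of $\phi$, so strict positivity of the zeroth order term is unavailable; one gets around this by noting that the maximum principle only needs a nonnegative potential, and that a positive interior maximum of $w_{\pm} - M$ would, by the strong maximum principle, make $w_{\pm} - M$ a positive constant and hence --- substituting back into the differential inequality --- force $|\phi| \equiv 0$, contradicting irreducibility. Second, the boundary is delicate: excluding a boundary maximum of $w_{\pm} - M$ requires a careful argument combining \cref{eq:neum1,eq:neum2}, the Hopf boundary point lemma, and the constancy of $\ast F_{\nul}$. In both places the hypothesis $\phi \not\equiv 0$ is indispensable.
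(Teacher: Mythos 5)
Your treatment of \cref{ineq:psibound} is essentially the paper's: both derive $\left( \Delta + 2\beta|\phi|^2 \right) \left( \tfrac{\alpha}{2\beta} - \tfrac{1}{2}|\phi|^2 \right) = |\nabla\phi|^2$ from \cref{eq:gl2} and apply the strong maximum principle with the Neumann condition \eqref{eq:neum2}, including the rigidity statement. For \cref{ineq:l2fbound} you take a genuinely different and, I think, preferable route: the paper writes $f - f_{\nul} = i\Lambda(F_{\scriptscriptstyle\nabla} - F_{\nul})$ as $G_\phi \star \left( |\nabla^{1,0}\phi|^2 - |\nabla^{0,1}\phi|^2 - |\phi|^2 f_{\nul} \right)$, where $G_\phi$ is the Neumann Green's function of $\Delta + |\phi|^2$, and its constant involves the operator norm $\|G_\phi\|$, which a priori depends on the critical point; your argument --- the energy identity $\|\nabla\phi\|_2^2 = \alpha\|\phi\|_2^2 - \beta\|\phi\|_4^4$ from \cref{eq:wgl2}, the first-order \cref{eq:gl1} to bound $\|dg\|_2$, and a Poincar\'e inequality (Dirichlet when $\partial\Sigma \neq \emptyset$ via \cref{eq:neum1}, mean-zero via Chern--Weil when $\Sigma$ is closed) --- produces an explicit constant depending only on $\alpha$, $\beta$ and the geometry, which is what the uniformity claimed in the lemma actually requires.

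The gap is in \cref{ineq:fbound} when $\partial\Sigma \neq \emptyset$, and it is exactly the point you flag as ``delicate'' without resolving. Your interior argument is fine (the degeneracy of the potential $|\phi|^2$ is correctly handled by the strong maximum principle plus irreducibility), but the boundary maximum of $w_{\pm} - M$ cannot be excluded by the Hopf lemma in the way you suggest. On $\partial\Sigma$ one has $\partial_n w_{\pm} = \pm\,\partial_n b$, since $\partial_n|\phi|^2 = 2\Re\left( h\left( \nabla_n\phi, \phi \right) \right) = 0$ by \cref{eq:neum2}; and \cref{eq:gl1} in dimension two says $\ast\, d\left( b - b_{\nul} \right) = \pm\,\Im\left( h\left( \phi, \nabla\phi \right) \right)$, so \cref{eq:neum2} kills only the \emph{tangential} derivative of $b - b_{\nul}$ (consistently with \cref{eq:neum1}), while $\partial_n\left( b - b_{\nul} \right) = \pm\,\Im\left( h\left( \phi, \nabla_t\phi \right) \right)$ is the tangential supercurrent, which has no sign and is not controlled by the de Gennes--Neumann conditions. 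So Hopf's lemma gives no contradiction at a boundary maximum, and the conclusion $w_{\pm} \leqslant M$ on $\partial\Sigma$ --- which at a boundary point is precisely the nontrivial assertion $B_{\nul} \leqslant M - \tfrac{1}{2}|\phi|^2$ that \hyperlink{main:nonexi}{Main~Theorem~2} later extracts from \cref{ineq:fbound} --- is not established. The paper instead packages the same two scalar identities through the positive Neumann Green's function $G_\phi$ and the normalization $G_\phi \star |\phi|^2 = 1$ (asserting that $f - f_{\nul}$ satisfies the homogeneous Neumann condition), so the positivity of $G_\phi$ plays the role of your maximum principle without requiring a sign on $\partial_n w_{\pm}$; to complete your version you would need either to justify that boundary step by a comparable device or to supply the missing boundary barrier.
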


\begin{proof}
When $\left( \nabla, \upphi \right)$ is reducible, and hence $\upphi$ is identically zero, we get $F_{\nabla} = F_0$ by \cref{eq:gl1,eq:gl2}.  Thus \cref{ineq:psibound,ineq:l2fbound} hold.  We can therefore assume that $\left( \nabla, \upphi \right)$ is irreducible.  Because the inequalities in the lemma are gauge invariant, and every solution is gauge equivalent to a smooth solution, we assume that $\left( \nabla, \upphi \right)$ is smooth.  Set
\begin{equation}
w = \tfrac{\alpha}{2\beta} - \tfrac{1}{2} |\upphi|^2,  \qquad f = i \Lambda F_\nabla,  \quad \textnormal{and} \quad f_0 = i \Lambda F_0.
\end{equation}
Using \cref{eq:gl2} straightforward computation (see \cite{T80}*{Section~III}) provides
\begin{equation}
\left( \Delta + 2 \beta |\upphi|^2 \right) w = |\nabla \upphi|^2.  \label{eq:ellip}
\end{equation}
Since $2 \beta |\upphi|^2$ and $|\nabla \upphi|^2$ are both non-negative, the maximum principle (cf. \cite{JT80}*{Proposition~3.3}) implies that $w$ is either strictly positive, or vanishes identically, thus proving \cref{ineq:psibound} and the claim about the case of equality at point.  Hence when equality holds $\upphi$ is a nowhere zero, thus $L$ is trivial.  By \cref{eq:gl2}, $\upphi$ is also parallel with respect to $\nabla$, hence by \cref{eq:gl1,eq:gl2} $\nabla$ and $\nabla^0$ are both flat.

\smallskip

Let $\star$ be the convolution of functions.  Let $G_\upphi$ be the Green's function of the positive, elliptic operator $\Delta + |\upphi|^2$ on $\Omega^0$ with Neumann boundary conditions, and similarly, $G_0$ be a Green's function of the scalar Laplacian, $\Delta$, on $\Omega^0$ with Neumann boundary conditions, which can be chosen to be everywhere positive.  Let $y \in \Sigma$ any, then for all $x \neq y$
\begin{equation}
\left( \Delta_x + |\upphi|^2 (x) \right) \left( G_0 (x,y) - G_\upphi (x,y) \right) = |\upphi|^2 (x) G_0 (x,y) \geqslant 0.
\end{equation}
Thus by the maximum principle 
\begin{equation}
G_0 (x,y) \geqslant G_\upphi (x,y).  \label[ineq]{ineq:max}
\end{equation}

Straightforward computation (see \cite{T80}*{Section~III}) provides
\begin{equation}
\left( \Delta + |\upphi|^2 \right) (f - f_0) = |\nabla^{1,0} \upphi|^2 - |\nabla^{0,1} \upphi|^2 - |\upphi|^2 f_0,
\end{equation}
and thus
\begin{equation}
\left| \left( \Delta + |\upphi|^2 \right) (f - f_0) \right| \leqslant |\nabla^{1,0} \upphi|^2 + |\nabla^{0,1} \upphi|^2 + |\upphi|^2 | f_0 | \leqslant \left( \Delta + |\upphi|^2 \right) w + |2 \beta - 1| w + \tfrac{\alpha}{\beta} |F_0|.  \label[ineq]{ineq:ellip}
\end{equation}

Both $f - f_0$ and $w$ satisfy the Neumann boundary conditions by \cref{eq:neum2,eq:gl1}.  Hence
\begin{equation}
f - f_0 = G_\upphi \star \left( |\nabla^{1,0} \upphi|^2 - |\nabla^{0,1} \upphi|^2 - |\upphi|^2 f_0 \right).  \label{eq:green}
\end{equation}
Furthermore recall that $0 \leqslant w \leqslant \tfrac{\alpha}{\beta}$.  Thus by \cref{ineq:ellip,ineq:max,eq:green} we get that
\begin{align}
\|f - f_0\|_2 & \leqslant \| G_\upphi \star \left| \left( \Delta + |\upphi|^2 \right) (f - f_0) \right| \|_2  \\
& \leqslant \| G_\upphi \star \left( |\nabla \upphi|^2 + |\upphi|^2 |F_0| \right)  \|_2  \\
& \leqslant \| G_0 \star \left( \Delta w + 2 \beta |\upphi|^2 w + \tfrac{\alpha}{\beta} |F_0| \right) \|_2  \\
& \leqslant \tfrac{\alpha}{\beta} \left( \tfrac{1}{2} + \left( \tfrac{\alpha}{4} + \| F_0 \|_2 \right) \| G_0 \| \right).
\end{align}
Setting $C^{F_0}_{\alpha, \beta} = \tfrac{\alpha}{\beta} \left( 1 + \left( |2\beta - 1| + \| F_0 \|_2 \right) \| G_0 \| \right)$ proves \cref{ineq:l2fbound}.

\smallskip

Finally, assume that $\left( \nabla, \upphi \right)$ is irreducible and $F_0$ solves Maxwell's equation.  Thus $f_0$ is constant, and \cref{eq:ellip,eq:green} and the fact that $G_\upphi \star |\upphi|^2 = 1$ give us:
\begin{align}
|f| &= | G_\upphi \star \left( |\nabla^{1,0} \upphi|^2 - |\nabla^{0,1} \upphi|^2 \right) |  \\
& \leqslant G_\upphi \star \left( |\nabla \upphi|^2 \right)  \\
&= w + (2 \beta - 1) G_\upphi \left( |\upphi|^2 w \right)  \\
& \leqslant w + \max \left\{ \tfrac{(2 \beta - 1) \alpha}{2 \beta}, 0 \right\} \left( G_\upphi \star |\upphi|^2 \right)  \\
&= w + \max \left\{ \alpha - \tfrac{\alpha}{2 \beta}, 0 \right\}  \\
&= \max \left\{ \alpha, \tfrac{\alpha}{2\beta} \right\} - \tfrac{1}{2} |\upphi|^2
\end{align}
which proves \cref{ineq:fbound}.
\end{proof}

\smallskip

\subsection{The Proof of Main Theorem 2}
\label{sec:pr2}

Assume again that $F_0$ solves Maxwell's equation, and thus $f_0$ is constant where $f_0$ was defined in \Cref{lem:bounds}.  Then the magnitude of the external magnetic field
\begin{equation}
B_0 = |F_0| = |f_0|
\end{equation}
is also constant.

\smallskip

\begin{proof}[The proof of \hyperlink{main:nonexi}{Main~Theorem~2}]
First we prove the statements about $\lambda_1$.  Let $\left( \nabla^0 \right)^{0,1}$ be the Cauchy--Riemann operator associated to the connection $\nabla^0$.   Using the K\"ahler identities and the de Gennes--Neumann boundary conditions we get that
\begin{equation}
\left( \nabla^0 \right)^* \nabla^0 = 2 \left( \left( \nabla^0 \right)^{0,1} \right)^* \left( \nabla^0 \right)^{0,1} + f_0.
\end{equation}
Hence if $f_0 \geqslant 0$
\begin{equation}
\| \nabla^0 \upphi \|^2_2 = 2 \| \left( \nabla^0 \right)^{0,1} \upphi \|^2_2 + f_0 \| \upphi \|^2_2 \geqslant B_0 \| \upphi \|^2_2,
\end{equation}
for any $\upphi \in \Omega^0_L$.  The lower bound can be achieved by choosing $\upphi$ to be holomorphic.  Such $\upphi$ exists, because $f_0$ is non-negative.  Similarly, for a negative $f_0$
\begin{equation}
\| \nabla^0 \upphi \|^2_2 = 2 \| \partial_{\nabla^0} \upphi \|^2_2 - f_0 \| \upphi \|^2_2 \geqslant |f_0| \| \upphi \|^2_2 = B_0 \| \upphi \|^2_2,
\end{equation}
and the inequality is sharp for non-zero, anti-holomorphic sections, which exist due to the sign of $f_0$.  Hence whenever $F_0$ solves Maxwell's equations $\lambda_1$ equals to $B_0$.

\smallskip

Now we prove the claim about non-existence when $\Sigma$ is closed.  Assume that $\left( \nabla, \upphi \right)$ is an irreducible solutions of \cref{eq:gl1,eq:gl2}, and $\partial \Sigma = \emptyset$.   Chern--Weil theory tells us that in this case $|F_0|$ has to be $\tfrac{2 \pi |d|}{\Ar \left( \Sigma \right)}$, where $d$ is the degree of the line bundle $L$.  By \cref{ineq:fbound} and  Chern--Weil theory again, we get that
\begin{equation}
2 \pi |d| = \left| \int\limits_\Sigma F_\nabla \right| \leqslant \int\limits_\Sigma |F_\nabla| \omega \leqslant \max \left\{ \alpha, \tfrac{\alpha}{2 \beta} \right\} \Ar \left( \Sigma \right) - \tfrac{1}{2} \| \upphi \|_2^2 < \max \left\{ \alpha, \tfrac{\alpha}{2 \beta} \right\} \Ar \left( \Sigma \right),
\end{equation}
Strict inequality holds in the last step since $\upphi \neq 0$ somewhere.  Thus there are no irreducible solutions (by the contrapositive of the previous implication) when
\begin{equation}
\max \left\{ \alpha, \tfrac{\alpha}{2 \beta} \right\} \leqslant \tfrac{2 \pi |d|}{\Ar \left( \Sigma \right)} = \lambda_1.
\end{equation}

\smallskip

When $\Sigma$ is not closed, that is $\partial \Sigma \neq \emptyset$, the magnitude of the external magnetic field, $B_0$, can be any non-negative number.  \Cref{eq:neum1,,eq:flux,ineq:fbound} gives us
\begin{equation}
0 \leqslant B_0 = \tfrac{1}{\Ar \left( \Sigma \right)} \int\limits_\Sigma F_0 = \tfrac{1}{\Ar \left( \Sigma \right)} \int\limits_\Sigma F_\nabla \leqslant \max \left\{ \alpha, \tfrac{\alpha}{2\beta} \right\} - \tfrac{1}{\Ar \left( \Sigma \right)} \tfrac{1}{2} \| \upphi \|_2^2 \leqslant \max \left\{ \alpha, \tfrac{\alpha}{2\beta} \right\}.
\end{equation}
Strict equality cannot hold in the last step if $\upphi \neq 0$ somewhere, which proves that there are no irreducible solutions when
\begin{equation}
\max \left\{ \alpha, \tfrac{\alpha}{2 \beta} \right\} \leqslant B_0 = \lambda_1.
\end{equation}
\end{proof}

\smallskip

\begin{rem}
By the definition of $\lambda_1$, \cref{eq:lambda1}, it is easy to see (using the K\"ahler identities) that 
\begin{equation}
\lambda_1 \geqslant \inf\limits_\Sigma \left( | F_0 | \right).
\end{equation}
Thus \hyperlink{main:exi}{Main~Theorem~2}, a fortiori, still holds if $\lambda_1$ is replaced with the (computationally simpler) quantity $\inf\limits_\Sigma \left( | F_0 | \right)$ in \cref{ineq:cond2}.
\end{rem}

\smallskip

\section{Compactness}
\label{sec:cpt}

A real number $E$ is a {\it critical value} of the \hyperref[eq:glf]{energy \eqref{eq:glf}} if there is a critical point, $\left( \nabla, \upphi \right) \in \C_L \oplus \Omega^0_L$, such that $\E_{\alpha, \beta}^{F_0} \left( \nabla, \upphi \right) = E$.  In this section we prove that there are only finitely many critical values and furthermore the moduli space of all Ginzburg--Landau fields is compact.
\smallskip

\begin{proof}[Proof of \hyperlink{main:fin}{Main~Theorem~4}]
First, we prove that the set of critical values is bounded.  By \eqref{eq:glf2} we see that the \hyperref[eq:glf]{energy \eqref{eq:glf}} is bounded below.  If $\left( \nabla, \upphi \right)$ is a critical point, then using \cref{eq:wgl2} with $\uppsi = \upphi$, and \cref{ineq:l2fbound} give us
\begin{align}
\E_{\alpha, \beta}^{F_0} \left( \nabla, \upphi \right) &= \tfrac{1}{2} \| F_\nabla - F_0 \|^2_2 + \tfrac{1}{2} \| \nabla \upphi \|^2_2 - \tfrac{\alpha}{2} \| \upphi \|^2_2 + \tfrac{\beta}{4} \| \upphi \|^4_4  \\
& \leqslant \tfrac{1}{2} C_{\alpha, \beta}^{F_0} + \tfrac{\alpha}{2} \| \upphi \|^2_2 - \tfrac{\beta}{2} \| \upphi \|^4_4 - \tfrac{\alpha}{2} \| \upphi \|^2_2 + \tfrac{\beta}{4} \| \upphi \|^4_4  \\
& \leqslant \tfrac{1}{2} C_{\alpha, \beta}^{F_0},
\end{align}
which proves that the critical values are bounded above.

\smallskip

Next, we show that the set critical values inherits the discrete topology from $\rl$, using a result of Feehan and Maridakis about the \L{}ojasiewicz--Simon gradient inequality.  First of all, note that the \hyperref[eq:glf]{energy \eqref{eq:glf}} is constant on gauge equivalence classes, hence we can impose the Coulomb gauge fixing condition for the proof, that is we fix a smooth critical point $\left( \nabla, \upphi \right)$ and define the {\em Coulomb slice} to be
\begin{equation}
\S_C = \left\{ \left( \nabla + a, \upphi \right) \in \C_L \oplus \Omega^0_L \middle| \: d^*a = 0 \: \& \: a(n) = 0 \: \forall n \perp \partial \Sigma \right\}. \nonumber
\end{equation}
By the argument shown in the proof of \Cref{lem:gps}, every gauge equivalence class intersects $\S_C$.  The \hyperref[eq:glf]{energy \eqref{eq:glf}} is still analytic on the affine Hilbert manifold $\S_C$, thus its Hessian at $\left( \nabla, \upphi \right)$ is a symmetric bilinear form on the tangent space of $\S_C$, and it is defined as
\begin{equation}
H \left( \left( a, \uppsi \right), \left( b, \upchi \right) \right) = \tfrac{\partial^2}{\partial s \partial t} \left( \E_{\alpha, \beta}^{F_0} \left( \nabla + s \: a + t \: b, \upphi + s \: \uppsi + t \: \upchi \right) \right) \big|_{(s,t) = (0, 0)} .  \label{eq:hess}
\end{equation}
for any $\left( a, \uppsi \right), \left( b, \upchi \right) \in T_{\left( \nabla, \upphi \right)} \S_C$.  Straightforward computation using \cref{eq:hess} shows that the operator defined by the Hessian is a compact (in fact algebraic) perturbation of a Neumann-type Laplace operator, thus a Fredholm operator of index zero;  cf. \cite{W04}*{Chapter~1}, in particular \cite{W04}*{Theorem~4.7} for details.  Hence we can use \cite{FM15a}*{Theorem~1} for the \hyperref[eq:glf]{energy \eqref{eq:glf}}:  for each critical point $\left( \nabla, \upphi \right)$ there are constants $\delta, Z > 0$ and $\theta \in [1/2,1)$, such that if $\left( \nabla', \upphi' \right) \in \S_C$ satisfies $\|\left( \nabla, \upphi \right) - \left( \nabla', \upphi' \right) \|_{1,2} < \delta$, then
\begin{equation}
|\E_{\alpha, \beta}^{F_0} \left( \nabla', \upphi' \right) - \E_{\alpha, \beta}^{F_0} \left( \nabla, \upphi \right)|^\theta \leqslant Z \| D \E_{\alpha, \beta}^{F_0} \left( \nabla', \upphi' \right) \|.  \label[ineq]{ineq:ls}
\end{equation}
In particular, if $\left( \nabla', \upphi' \right)$ is also a critical point, then $\E_{\alpha, \beta}^{F_0} \left( \nabla', \upphi' \right) = \E_{\alpha, \beta}^{F_0} \left( \nabla, \upphi \right)$.

\smallskip

Now assume that the set of critical values in infinite, and let $\{ \left( \nabla^0 + a_k, \upphi_k \right) \}_{k \in \N} \subset \S_C$ be a sequence of critical points with distinct energies.  Then $\{ \left( \nabla^0 + a_k, \upphi_k \right) \}_{k \in \N}$ satisfies the conditions of \Cref{lem:gps}, as $\{ E_k = \E_{\alpha, \beta}^{F_0} \left( \nabla^0 + a_k, \upphi_k \right) \}_{k \in \N}$ is bounded and the derivatives, $D\E \left( \nabla^0 + a_k, \upphi_k \right)$, are zero for all $k \in \N$.  Thus there are sequence, $\{ k_l \}_{l \in \N}$ and $\{ g_l \in \G \}_{l \in \N}$, such that $\{ g_l \left( \nabla^0 + a_{k_l}, \upphi_{k_l} \right) \}_{l \in \N}$ converges in $\C_L \oplus \Omega^0_L$ to a pair $\left( \nabla^0 + a_\infty, \upphi_\infty \right)$.  Since the \hyperref[eq:glf]{energy \eqref{eq:glf}} is an analytic function on $\C_L \oplus \Omega^0_L$, $\left( \nabla^0 + a_\infty, \upphi_\infty \right)$ also a critical point with energy
\begin{equation}
E = \E_{\alpha, \beta}^{F_0} \left( \nabla^0 + a_\infty, \upphi_\infty \right) = \lim\limits_{i \rightarrow \infty} E_{k_l}.
\end{equation}
Let $\delta, Z > 0$ be the constants corresponding to $\left( \nabla^0 + a_\infty, \upphi_\infty \right)$ in \cref{ineq:ls}.  By the definition of convergence there exist $i_{\mathrm{cr}}$ such that if $i > i_{\mathrm{cr}}$, then
\begin{equation}
\| \left( \nabla^0 + a_\infty, \upphi_\infty \right) - g_l \left( \nabla^0 + a_{k_l}, \upphi_{k_l} \right) \|_{1,2} < \delta.
\end{equation}
The \L{}ojasiewicz--Simon gradient inequality \eqref{ineq:ls} then implies that $E_{k_l} = E$ for all $i > i_{\mathrm{cr}}$ which contradicts our assumption that there are infinitely many critical values.

\smallskip

The compactness of the moduli space of Ginzburg--Landau fields, $\M_{\alpha, \beta}^{F_0}$, defined in \eqref{eq:glmod}, also follows from \Cref{lem:gps}:  Since the space $\left( \C_L \oplus \Omega^0_L \right) / \G$ is a metric space, it has the Bolzano--Weierstrass Property, that is a subset $\M \subset \left( \C_L \oplus \Omega^0_L \right) / \G$ is compact if only if every sequence in $\M$ has a convergent subsequence in $\M$.  By the previous observation, the \hyperref[eq:glf]{energy \eqref{eq:glf}} is bounded and has constantly zero derivative on the space of all solutions of the \cref{eq:wgl1,eq:wgl2}.  Thus every sequence in this space satisfies the conditions of \Cref{lem:gps}, thus has a subsequence that is gauge equivalent to a convergent one.  That means that the projection of the space of all solutions of \cref{eq:wgl1,eq:wgl2} to $\C_L \oplus \Omega^0_L / \G$, which is (by definition) $\M_{\alpha, \beta}^{F_0}$, has the Bolzano--Weierstrass Property, and thus compact.
\end{proof}

\bibliography{references}

\end{document}